\newcommand{\tinyspace}{\mspace{1mu}}
\newcommand{\abs}[1]{\left\lvert\tinyspace #1 \tinyspace\right\rvert}
\renewcommand{\t}{{\scriptscriptstyle\mathsf{T}}}
\newcommand{\setft}[1]{\mathrm{#1}}
\newcommand{\density}[1]{\setft{D}\left(#1\right)}
\newcommand{\unitary}[1]{\setft{U}\left(#1\right)}
\def\real{\mathbb{R}}
\newenvironment{mylist}[1]{\begin{list}{}{
    \setlength{\leftmargin}{#1}
    \setlength{\rightmargin}{0mm}
    \setlength{\labelsep}{2mm}
    \setlength{\labelwidth}{8mm}
    \setlength{\itemsep}{0mm}}}
    {\end{list}}
\def\ot{\otimes}
\newcommand{\defeq}{\stackrel{\smash{\textnormal{\tiny def}}}{=}}
\newcommand{\Pa}[1]{\left(#1\right)}
\newcommand{\Br}[1]{\left[#1\right]}
\newcommand{\set}[1]{\{#1\}}
\newcommand{\Set}[1]{\left\{#1\right\}}
\DeclareMathOperator{\trace}{Tr}
\newcommand{\Ptr}[2]{\trace_{#1}\Pa{#2}}
\newcommand{\Tr}[1]{\Ptr{}{#1}}
\def\cH{\mathcal{H}}
\def\cS{\mathcal{S}}
\def\cU{\mathcal{U}}
\def\rH{\mathrm{H}}
\def\rS{\mathrm{S}}
\newtheorem{thrm}{Theorem}[section]
\newtheorem{prop}[thrm]{Proposition}
\newtheorem{conj}[thrm]{Conjecture}
\theoremstyle{definition}
\numberwithin{equation}{section}
\newcounter{questionnumber}
\begin{document}

%============================================================================================================%
\title{\Large Quantum marginal inequalities and the conjectured entropic inequalities}
%============================================================================================================%

\author{Lin Zhang\footnote{E-mail: godyalin@163.com}\ , Hongjin He\\
  {\it\small Institute of Mathematics, Hangzhou Dianzi University, Hangzhou 310018, PR~China}\\
  Yuan-hong Tao\\
  {\it \small College of Science,Yanbian University, Yanji 133002, PR~China}}
\date{}
\maketitle
\maketitle \mbox{}\hrule\mbox\\
\begin{abstract}

A conjecture -- \emph{the modified super-additivity inequality} of
relative entropy -- was proposed in \cite{Zhang2012}: There exist
three unitary operators $U_A\in \unitary{\cH_A},U_B\in
\unitary{\cH_B}$, and $U_{AB}\in \unitary{\cH_A\ot \cH_B}$ such that
$$
\rS(U_{AB}\rho_{AB}U^\dagger_{AB}||\sigma_{AB}) \geqslant \rS(U_A\rho_AU^\dagger_A||\sigma_A) + \rS(U_B\rho_BU^\dagger_B||\sigma_B),
$$
where the reference state $\sigma$ is required to be full-ranked. A
numerical study on the conjectured inequality is conducted in this
note. The results obtained indicate that the modified
super-additivity inequality of relative entropy seems to hold for
all qubit pairs.

\end{abstract}
\maketitle \mbox{}\hrule\mbox

%=============================================================================%
\section{Introduction}
%=============================================================================%

Rau derived in \cite{Rau2010} a wrong monotonicity property of
relative entropy, and based on this false inequality he obtained the
\emph{super-additivity inequality} -- a much stronger monotonicity
-- of relative entropy:
$$
\rS(\rho_{AB}||\sigma_{AB})\geqslant \rS(\rho_A||\sigma_A) +
\rS(\rho_B||\sigma_B),
$$
where $\rho_{AB}$ and $\sigma_{AB}$ are bipartite states on
$\cH_A\ot\cH_B$. A simple counterexample \cite{Zhang2012} was provided to show that the above inequality is not correct. Moreover, it is \emph{conjectured} that, there exist three unitary operators $U_A\in \unitary{\cH_A},U_B\in \unitary{\cH_B}$, and $U_{AB}\in \unitary{\cH_A\ot \cH_B}$ such that
\begin{eqnarray}\label{eq:conjectured-ineq}
\rS(U_{AB}\rho_{AB}U^\dagger_{AB}||\sigma_{AB}) \geqslant \rS(U_A\rho_AU^\dagger_A||\sigma_A) + \rS(U_B\rho_BU^\dagger_B||\sigma_B),
\end{eqnarray}
where the reference state $\sigma$ is required to be full-ranked.

A numerical study on the conjectured inequality is conducted in this
note. The results obtained indicate that the modified
super-additivity inequality of relative entropy seems to hold for
all qubit pairs. An attempt is made to give some potential
applications in quantum information theory.

Before proceeding, we need to fix some notations. If the column vectors
$$
p = [p_1,\ldots,p_d]^\t\in \real^d,\quad q = [q_1,\ldots,q_d]^\t\in \real^d
$$
are two probability distributions, the \emph{Shannon entropy} of $p$ is defined by
$$
\rH(p) \defeq - \sum_{i=1}^d p_i\log_2 p_i,
$$
where $x\log_2 x := 0$ if $x=0$, and the \emph{relative entropy} of $p$ and $q$ is defined by
$$
\rH(p||q) \defeq \sum^d_{i=1} p_i(\log_2p_i - \log_2 q_i).
$$
Let $\density{\cH_d}$ denote the set of all density matrices $\rho$ on a $d$-dimensional Hilbert space $\cH$. The
\emph{von Neumann entropy} $\rS(\rho)$ of $\rho$ is defined by
$$
\rS(\rho) \defeq - \Tr{\rho\log\rho}.
$$
In fact, this definition can be equivalently described as follows: if we denote the vector consisting of eigenvalues of $\rho$ by $\lambda(\rho) = [\lambda_1(\rho),\ldots,\lambda_d(\rho)]^\t$, then we have
$$
\rS(\rho) = \rH(\lambda(\rho)) = \rH(\lambda^\downarrow(\rho)),
$$
where we write $\lambda^\downarrow(\rho)$ for a vector with components being the same as $\lambda(\rho)$ and arranged in non-increasing order, i.e.
$$
\lambda^\downarrow(\rho) = [\lambda^\downarrow_1(\rho),\ldots,\lambda^\downarrow_d(\rho)]^\t\quad(\lambda^\downarrow_1(\rho)\geqslant\cdots\geqslant\lambda^\downarrow_d(\rho)).
$$
However,  $\lambda^{\uparrow}(\rho)$ stands for the vector with
eigenvalues of $\rho$ arranged in increasing order. The
\emph{relative entropy} of two mixed states $\rho$ and $\sigma$ is
defined by
$$
\rS(\rho||\sigma) \defeq \left\{\begin{array}{ll}
                             \Tr{\rho(\log\rho -
\log\sigma)}, & \text{if}\ \mathrm{supp}(\rho) \subseteq
\mathrm{supp}(\sigma), \\
                             +\infty, & \text{otherwise}.
                           \end{array}
\right.
$$

%=============================================================================%
\section{Technical lemmas}
%=============================================================================%

The so-called \emph{quantum marginal problem}, i.e. the existence of
mixed states $\rho_{AB}$ two (or multi-) component system $\cH_{AB}
= \cH_A\ot\cH_B$ with reduced density matrices $\rho_A,\rho_B$ and
given spectra $\lambda_{AB},\lambda_A,\lambda_B$, is discussed in
the literature, and a complete solution of this problem in terms of
linear inequalities on the spectra is given in the following
proposition.

\begin{prop}[Klyachko, \cite{Klyachko2006}]
Assume that there is a bipartite system $AB$, described by Hilbert
space $\cH_{AB} = \cH_A\ot\cH_B$. All constraints on spectra
$\lambda^\downarrow(\rho_X) = \lambda^X(X= A, B, AB)$, arranged in
non-increasing order, are given by the following linear
inequalities:
\begin{eqnarray}
\sum_{i=1}^{m} a_i \lambda^A_{\alpha(i)} + \sum_{j=1}^{n} b_j
\lambda^B_{\beta(j)} \leqslant \sum_{k=1}^{mn} (a+b)^\downarrow_k
\lambda^{AB}_{\gamma(k)},
\end{eqnarray}
where $a:a_1\geqslant a_2\geqslant\cdots\geqslant a_m$,
$b:b_1\geqslant b_2\geqslant\cdots\geqslant b_n$ with $\sum_{i=1}^m
a_i = \sum_{j=1}^nb_j=0$ are "test spectra", the spectrum
$(a+b)^\downarrow$ consists of numbers $a_i+b_j$ arranged in
non-increasing order, and $\alpha\in \cS_m,\beta\in\cS_n, \gamma\in
\cS_{mn}$ are permutations subject to a topological condition
$c^\gamma_{\alpha\beta}(a,b)\neq0$, where the meaning of
$c^\gamma_{\alpha\beta}$ can be found in \cite{Klyachko2006}.
\end{prop}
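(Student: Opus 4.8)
The plan is to recognize this proposition as the solution of the one-body \emph{quantum marginal problem} and to derive it from the geometry of moment maps together with Schubert calculus, following the circle of ideas Klyachko developed for the Hermitian eigenvalue (Horn) problem. First I would fix the global spectrum $\lambda^{AB}$ and view the set of states $\rho_{AB}$ with that spectrum as a single (co)adjoint orbit under conjugation by $\unitary{\cH_{AB}}$. The partial traces $\rho_A = \ptr{B}{\rho_{AB}}$ and $\rho_B = \ptr{A}{\rho_{AB}}$ are precisely the components of the moment map for the subgroup $\unitary{\cH_A}\times\unitary{\cH_B}$ acting by $U_A\ot U_B$; passing to spectra means projecting onto the positive Weyl chamber. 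Thus the achievable triples $\pa{\lambda^A,\lambda^B,\lambda^{AB}}$ form the \emph{moment polytope} of this restriction. (Equivalently, purifying $\rho_{AB}$ with an ancilla $\cH_C$ turns the bipartite mixed problem into a tripartite pure one, since the nonzero part of $\lambda^{AB}$ equals the spectrum of $\rho_C$.)

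The second step is convexity: by the nonabelian convexity theorem of Kirwan (building on Atiyah and Guillemin--Sternberg), the image of the moment map is a convex polytope, so it suffices to produce a finite list of linear inequalities in the three spectra that is at once necessary and sufficient. This is exactly the shape of the asserted conclusion.

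For \emph{necessity} I would argue variationally. Given test spectra $a$ and $b$ summing to zero, choose Hermitian observables $H_A,H_B$ with those spectra and use the identity
$$
\Tr{\rho_A H_A}+\Tr{\rho_B H_B}=\Tr{\rho_{AB}\pa{H_A\ot I + I\ot H_B}},
$$
noting that $H_A\ot I + I\ot H_B$ has eigenvalues $\set{a_i+b_j}$, i.e.\ spectrum $(a+b)^\downarrow$. A crude application of the von Neumann trace inequality already pairs largest with largest eigenvalues; the sharper inequalities indexed by $\alpha,\beta,\gamma$ arise from forcing the eigenflags of $H_A$, $H_B$, and $\rho_{AB}$ to meet prescribed Schubert incidence conditions. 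The existence of flags realizing those incidences in general position is guaranteed \emph{precisely} by the nonvanishing of the intersection number $c^\gamma_{\alpha\beta}$ in the cohomology of the relevant flag varieties, which is the role of the topological condition in the statement.

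The third step, \emph{sufficiency and completeness}, is where I expect the main obstacle. Here I would invoke the Berenstein--Sjamaar description of the moment polytope for the restriction of a coadjoint orbit to a subgroup, whose facets are cut out by exactly the cohomological nonvanishing conditions appearing above; combined with geometric invariant theory (the Hilbert--Mumford criterion and the Kempf--Ness theorem relating symplectic and GIT quotients), this shows that every triple satisfying all the listed inequalities is actually realized by some $\rho_{AB}$. The delicate point is \textbf{completeness}: that the list is exhaustive and free of spurious or redundant members. This is settled by the analysis of essential inequalities (Ressayre) and by Klyachko's original computation in \cite{Klyachko2006}, and translating the geometric facet data into the explicit combinatorial form with the Littlewood--Richardson-type coefficients $c^\gamma_{\alpha\beta}$ is where essentially all of the technical work resides.
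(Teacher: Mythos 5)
The paper offers no proof of this proposition at all: it is imported verbatim as Klyachko's theorem, with the argument delegated entirely to \cite{Klyachko2006}. There is therefore no internal proof to compare yours against; the meaningful comparison is with the proof in the cited source, and your outline tracks that proof faithfully --- the reformulation of the one-body marginal problem as computing the moment polytope for the restriction of a coadjoint orbit of $\unitary{\cH_A\ot\cH_B}$ to the subgroup $\unitary{\cH_A}\times\unitary{\cH_B}$ acting as $U_A\ot U_B$, Kirwan convexity to reduce to finitely many linear inequalities, the variational argument with test observables $H_A\ot I + I\ot H_B$ (whose spectrum is $(a+b)^\downarrow$) sharpened by Schubert incidence conditions for necessity, and the Berenstein--Sjamaar/GIT machinery for sufficiency. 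This is the correct architecture, and your identification of the nonvanishing of $c^\gamma_{\alpha\beta}$ as the cohomological guarantee that the required flag incidences can be realized in general position is exactly the role that condition plays.

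Be aware, though, of what your text actually establishes: it is a roadmap, not a self-contained argument. Every load-bearing step --- the passage from trace inequalities to Schubert-indexed inequalities, the facet description of the moment polytope, the Kempf--Ness/Hilbert--Mumford transfer between symplectic and GIT pictures, and above all the completeness of the list of inequalities --- is invoked by citation (Berenstein--Sjamaar, Ressayre, Klyachko) rather than carried out. You flag this yourself in your final paragraph, which is to your credit, but it means the proposal proves nothing that the statement's own citation does not already assert. In the context of this particular paper, which likewise states the result without proof and only uses its two-qubit specialization (Bravyi's inequalities in Proposition~\ref{lem:constraints}), that level of detail is arguably appropriate; as a standalone proof of the proposition it would not be, since essentially all of the mathematical content lives in the steps you defer.
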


In particular, for the simplest quantum multipartite system, i.e.
two-qubit system, there is a nice solution for the quantum marginal
problem:

\begin{prop}[Bravyi, \cite{Bravyi2004}]\label{lem:constraints}
Mixed two-qubit state $\rho_{AB}$ with spectrum $\lambda_1\geqslant \lambda_2\geqslant\lambda_3\geqslant \lambda_4\geqslant0$ and margins $\rho_A,\rho_B$ exists if and only if minimal eigenvalues $\lambda_A,\lambda_B$ of the margins satisfy inequalities
\begin{eqnarray}
\begin{cases}
\min(\lambda_A,\lambda_B) \geqslant \lambda_3 + \lambda_4,\\
\lambda_A + \lambda_B \geqslant \lambda_2 + \lambda_3 + 2\lambda_4,\\
\abs{\lambda_A - \lambda_B} \leqslant \min(\lambda_1 - \lambda_3, \lambda_2 - \lambda_4).
\end{cases}
\end{eqnarray}
\end{prop}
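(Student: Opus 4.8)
The plan is to treat this as the two-qubit instance of Klyachko's result (Proposition~2.1), but to argue directly rather than unwinding the Littlewood--Richardson machinery. First I would reduce the data: since $\rho_A$ and $\rho_B$ are $2\times 2$, each margin is determined up to a local unitary by its smaller eigenvalue, $\lambda_A,\lambda_B\in[0,\tfrac12]$, and local unitaries do not affect the existence of a compatible $\rho_{AB}$, so the problem depends only on $\lambda_A,\lambda_B$ and the global spectrum $\lambda_1\geqslant\lambda_2\geqslant\lambda_3\geqslant\lambda_4$. I would then split the ``if and only if'' into necessity (the three inequalities hold whenever a compatible $\rho_{AB}$ exists) and sufficiency (a compatible $\rho_{AB}$ can be built whenever they hold). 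The workhorse for necessity is the von Neumann/Ky~Fan variational principle: for a Hermitian $M$ on $\cH_{AB}$ with eigenvalues $m_1\geqslant\cdots\geqslant m_4$ one has $\sum_k m_k\lambda_{5-k}\leqslant \Tr{M\rho_{AB}}\leqslant\sum_k m_k\lambda_k$.

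For the first inequality, let $\ket{\psi}\in\cH_A$ be a minimal eigenvector of $\rho_A$, so that $\lambda_A=\Tr{(\out{\psi}{\psi}\ot I)\rho_{AB}}$. Since $\out{\psi}{\psi}\ot I$ is a rank-two projection (spectrum $(1,1,0,0)$), the lower bound gives $\lambda_A\geqslant\lambda_3+\lambda_4$, and symmetrically $\lambda_B\geqslant\lambda_3+\lambda_4$. For the second inequality, take $P_A,P_B$ to be the rank-one projections onto the minimal eigenvectors of $\rho_A,\rho_B$; then $\lambda_A+\lambda_B=\Tr{(P_A\ot I+I\ot P_B)\rho_{AB}}$, and the operator $M=P_A\ot I+I\ot P_B$ has spectrum $(2,1,1,0)$, so the lower bound yields $\lambda_A+\lambda_B\geqslant 2\lambda_4+\lambda_3+\lambda_2$. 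Both of these are clean one-line variational estimates.

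The third inequality is the crux. By swapping $A\leftrightarrow B$ (a unitary on $\cH_{AB}$ that fixes the spectrum of $\rho_{AB}$) I may assume $\lambda_A\geqslant\lambda_B$ and must prove $\lambda_A-\lambda_B\leqslant\min(\lambda_1-\lambda_3,\lambda_2-\lambda_4)$. The naive operator $P_A\ot I-I\ot P_B$ has spectrum $(1,0,0,-1)$ and only delivers the weaker bound $\lambda_A-\lambda_B\leqslant\lambda_1-\lambda_4$; the sharp bound must use more than the spectrum of this operator. Here I would pass to the $2\times 2$ block form of $\rho_{AB}$ in the eigenbasis of $\rho_A$ on the first factor, writing the $\cH_B$-blocks as $R_{00},R_{01},R_{01}^\dagger,R_{11}$ with $\Tr{R_{00}}=\lambda_A$, $\Tr{R_{11}}=1-\lambda_A$, and $\rho_B=R_{00}+R_{11}$; positivity of $\rho_{AB}$ forces $R_{00},R_{11}\geqslant0$, Cauchy interlacing bounds the eigenvalues of the diagonal blocks by those of $\rho_{AB}$, and Weyl's inequality controls $\lambda_{\min}(R_{00}+R_{11})=\lambda_B$. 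Extracting the precise constants $\lambda_1-\lambda_3$ and $\lambda_2-\lambda_4$ from the interplay of these bounds (together with $\Tr{R_{01}}=0$ and $\lambda_A\leqslant\tfrac12$) is the delicate step, and I expect this to be the main obstacle; the alternative is to read it off directly from the single relevant Horn/Klyachko inequality by checking nonvanishing of the corresponding coefficient $c^\gamma_{\alpha\beta}$.

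For sufficiency I would use convexity. By Proposition~2.1 the set of compatible spectra is cut out by linear inequalities, hence the feasible region $\cF$ of pairs $(\lambda_A,\lambda_B)$ for a fixed global spectrum is a convex polygon; necessity shows $\cF$ is contained in the polygon $\cP$ defined by the three displayed inequalities (intersected with $0\leqslant\lambda_A,\lambda_B\leqslant\tfrac12$). It therefore suffices to exhibit, for each vertex of $\cP$, an explicit $\rho_{AB}$ with the prescribed spectrum realizing that vertex, for then $\cP=\operatorname{conv}(\text{vertices})\subseteq\cF\subseteq\cP$ forces equality. The vertices occur where two boundary constraints meet, i.e. where certain of the inequalities are saturated; at these extreme configurations the required state typically has a simple structured form (a direct sum of $2\times2$ blocks, or a Bell/$X$-diagonal state), so the remaining work is to write down one such state per vertex and verify its margins and spectrum. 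Assembling these explicit realizations, and confirming the vertex list of $\cP$, completes the argument.
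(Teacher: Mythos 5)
First, a calibration point: the paper does not prove this proposition at all --- it is imported verbatim from Bravyi \cite{Bravyi2004} as a known ingredient --- so your attempt has to stand entirely on its own. Its solid part is the necessity of the first two inequalities: with $P_A,P_B$ the rank-one projections onto the minimal eigenvectors, the identities $\lambda_A=\Tr{(P_A\ot I)\rho_{AB}}$ and $\lambda_A+\lambda_B=\Tr{(P_A\ot I+I\ot P_B)\rho_{AB}}$ together with the von Neumann pairing bound applied to the spectra $(1,1,0,0)$ and $(2,1,1,0)$ do give $\min(\lambda_A,\lambda_B)\geqslant\lambda_3+\lambda_4$ and $\lambda_A+\lambda_B\geqslant\lambda_2+\lambda_3+2\lambda_4$; these two steps are correct and complete. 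The genuine gap is the third inequality, and it is not merely ``delicate'': the toolkit you name for it (block positivity, Cauchy interlacing of the diagonal blocks against $\lambda$, Weyl's inequality for $\lambda_B=\lambda_{\min}(R_{00}+R_{11})$, $\Tr{R_{01}}=0$, $\lambda_A\leqslant\tfrac12$) provably cannot yield the sharp constants, because there is ``phantom'' block data satisfying every one of those constraints while violating the inequality. Take $\lambda=(0.7,\,0.2,\,0.06,\,0.04)$ and the block-diagonal candidate $R_{00}=\operatorname{diag}(0.36,0.04)$, $R_{11}=\operatorname{diag}(0.54,0.06)$, $R_{01}=0$: both blocks are positive, their eigenvalues interlace $\lambda$ (i.e.\ $\lambda_3\leqslant\mu_1\leqslant\lambda_1$ and $\lambda_4\leqslant\mu_2\leqslant\lambda_2$ for each block), the traces are $\lambda_A=0.4\leqslant\tfrac12$ and $0.6$, and the margins are $(0.4,0.1)$, consistent with Weyl. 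The first two Bravyi inequalities hold ($0.1\geqslant 0.1$ and $0.5\geqslant 0.34$), yet the third fails:
\begin{equation*}
\abs{\lambda_A-\lambda_B}=0.3\;>\;0.16=\min(\lambda_1-\lambda_3,\;\lambda_2-\lambda_4).
\end{equation*}
Of course this matrix does not actually have spectrum $\lambda$ --- but every condition you propose to use is blind to that fact, since interlacing and block positivity are all that your argument extracts from the spectral hypothesis. So any proof of the third inequality must exploit the off-diagonal block and the global eigenvector structure in an essentially new way, which is exactly the input you do not supply (and acknowledge not supplying).

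The sufficiency half is likewise a program rather than a proof. The convexity framing is legitimate --- Klyachko's theorem as quoted in Section 2 does make the compatible spectral triples a polytope, so its slice at fixed global spectrum is a convex polygon $\cF$ --- but everything that remains is precisely the content of Bravyi's theorem: you must (i) enumerate the vertices of the candidate polygon $\cP$, whose combinatorial type changes with $\lambda$ (for $\lambda=(0.4,0.25,0.25,0.1)$ the third inequality is implied by the first together with $\lambda_A,\lambda_B\leqslant\tfrac12$ and contributes no edge, while for $\lambda=(0.7,0.2,0.06,0.04)$ it is active), and (ii) exhibit, in every regime and at every vertex, an explicit two-qubit state with spectrum exactly $\lambda$ realizing those margins. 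Neither step is carried out; ``a direct sum of $2\times 2$ blocks, or a Bell/$X$-diagonal state'' is an ansatz, not a construction, and verifying that such ansätze simultaneously hit a prescribed four-point spectrum and prescribed marginal eigenvalues is where the real work lies. In sum, your proposal correctly disposes of the easy third of the theorem (necessity of the first two inequalities) but leaves both the sharp marginal bound and all of realizability open, and the specific tools you earmark for the former can be shown, as above, to be insufficient in principle.
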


The following result attempts to give a possibility to corrected version
of superadditivity inequality. Here we give another proof in terms of matrix analysis language.

\begin{prop}[Zhang, \cite{Zhang}]\label{prop:Zhang}
For given two quantum states $\rho,\sigma\in\density{\cH_d}$, where $\sigma$ is invertible, it holds that
\begin{eqnarray}
\min_{U\in\unitary{\cH_d}} \rS(U\rho U^\dagger||\sigma) &=& \rH(\lambda^\downarrow(\rho)||\lambda^\downarrow(\sigma)),\\
\max_{U\in\unitary{\cH_d}} \rS(U\rho U^\dagger||\sigma) &=&
\rH(\lambda^\downarrow(\rho)||\lambda^\uparrow(\sigma)),
\end{eqnarray}
where $\lambda^{\uparrow}_j(\sigma)$ stands for the eigenvalues
arranged in increasing order. $\unitary{\cH_d}$ denotes the set of
all unitary operators on $\cH_d$. Moreover, the set $\set{\rS(U\rho
U^\dagger||\sigma): U\in\unitary{\cH_d}}$ is identical to an
interval:
$$
\Set{\rS(U\rho U^\dagger||\sigma): U\in\unitary{\cH_d}} =
\Br{\rH(\lambda^\downarrow(\rho)||\lambda^\downarrow(\sigma)),
\rH(\lambda^\downarrow(\rho)||\lambda^\uparrow(\sigma))}.
$$
\end{prop}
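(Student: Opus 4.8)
\emph{Proof proposal.} The plan is to strip away the unitary dependence of the entropic part and reduce everything to a classical extremal trace inequality for Hermitian matrices. First I would use $\log(U\rho U^\dagger) = U(\log\rho)U^\dagger$ to observe that the self-entropy term is unitarily invariant,
$$
\Tr{U\rho U^\dagger\log(U\rho U^\dagger)} = \Tr{U\rho(\log\rho)U^\dagger} = \Tr{\rho\log\rho},
$$
so that, since $\sigma$ is invertible and hence $\log\sigma$ is a well-defined Hermitian operator,
$$
\rS(U\rho U^\dagger||\sigma) = \Tr{\rho\log\rho} - \Tr{U\rho U^\dagger\log\sigma}.
$$
Because $\Tr{\rho\log\rho}$ does not depend on $U$, minimizing $\rS(U\rho U^\dagger||\sigma)$ is the same as maximizing the single term $\Tr{U\rho U^\dagger\log\sigma}$, and maximizing $\rS$ is the same as minimizing that term.

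Next I would invoke the extremal trace inequality for two Hermitian matrices $M,N$ with eigenvalues $\mu_1\geqslant\cdots\geqslant\mu_d$ and $\nu_1\geqslant\cdots\geqslant\nu_d$, namely
$$
\sum_{i=1}^d\mu_i\nu_{d+1-i}\;\leqslant\;\Tr{UMU^\dagger N}\;\leqslant\;\sum_{i=1}^d\mu_i\nu_i\qquad(U\in\unitary{\cH_d}).
$$
The matrix-analytic proof I would give writes $M=\sum_i\mu_i\out{e_i}{e_i}$ and $N=\sum_j\nu_j\out{f_j}{f_j}$, giving $\Tr{UMU^\dagger N}=\sum_{i,j}\mu_i\nu_j\abs{\iinner{f_j}{Ue_i}}^2$, and then notes that the array $\bigl(\abs{\iinner{f_j}{Ue_i}}^2\bigr)_{i,j}$ is doubly stochastic. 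By Birkhoff's theorem this linear functional is extremized at a permutation matrix, and the rearrangement inequality pins the maximizer to the identity permutation and the minimizer to the order-reversal; each such permutation is realized by an appropriate unitary change of basis.

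Applying this with $M=\rho$ (eigenvalues $\lambda^\downarrow(\rho)$) and $N=\log\sigma$ (whose eigenvalues in non-increasing order are $\log\lambda^\downarrow(\sigma)$, as $\log$ is monotone) yields
$$
\max_U\Tr{U\rho U^\dagger\log\sigma}=\sum_i\lambda^\downarrow_i(\rho)\log\lambda^\downarrow_i(\sigma),\qquad \min_U\Tr{U\rho U^\dagger\log\sigma}=\sum_i\lambda^\downarrow_i(\rho)\log\lambda^\uparrow_i(\sigma).
$$
Substituting into the displayed expression for $\rS$ and writing $\Tr{\rho\log\rho}=\sum_i\lambda^\downarrow_i(\rho)\log\lambda^\downarrow_i(\rho)$ collapses the two optima to $\rH(\lambda^\downarrow(\rho)||\lambda^\downarrow(\sigma))$ and $\rH(\lambda^\downarrow(\rho)||\lambda^\uparrow(\sigma))$ respectively. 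For the interval claim I would use that $\unitary{\cH_d}$ is connected and $U\mapsto\rS(U\rho U^\dagger||\sigma)$ is continuous, so its range is a connected subset of $\real$ and therefore exactly the closed interval between the minimum and maximum just computed.

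The reduction and the connectedness argument are routine; the crux of the argument is the extremal trace inequality. The one point that must be stated with care is that the extremizing permutation configurations are genuinely attained by unitaries, not merely by the larger convex body of doubly stochastic matrices — this is immediate, since a permutation corresponds to a unitary realigning the eigenbasis of $\rho$ against that of $\sigma$, but it is the step where the passage from the Birkhoff relaxation back to $\unitary{\cH_d}$ actually happens.
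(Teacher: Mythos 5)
Your proof is correct, but it reaches the extremal values by a genuinely different route than the paper. The paper makes the same initial reduction $\rS(U\rho U^\dagger||\sigma) = -\rS(\rho) - \Tr{U\rho U^\dagger\log\sigma}$, but then runs a variational argument on the unitary group: it differentiates $f(U_t)$ along paths $U_t = \exp(tK)U_0$ and deduces from the first-order condition $\Tr{K[\log\sigma, U_0\rho U_0^\dagger]} = 0$ for all skew-Hermitian $K$ that any extremizer $U_0$ satisfies $[\sigma, U_0\rho U_0^\dagger] = 0$; the extrema are therefore attained at simultaneously diagonalizable configurations, and only then does the rearrangement inequality enter, to decide which eigenvalue pairing gives the minimum and which the maximum. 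You instead bypass calculus entirely via the Birkhoff relaxation: the identity $\Tr{UMU^\dagger N} = \sum_{i,j}\mu_i\nu_j\abs{\iinner{f_j}{Ue_i}}^2$ with a doubly stochastic coefficient array gives a global bound valid for \emph{every} $U$, with extremes over the polytope sitting at permutation matrices, and you correctly flag the one non-trivial point in this route --- that the extremal permutations lie in the (generally smaller) unistochastic set, realized by basis-realigning unitaries. The trade-off: the paper's critical-point argument is shorter on the page but leaves the final step (``by the rearrangement inequality'') compressed, since one must still translate commutativity into a statement about eigenvalue pairings; your argument is more self-contained and proves the two-sided bound for all $U$ at once, rather than only characterizing critical points, at the cost of invoking Birkhoff's theorem. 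Both proofs handle the interval claim identically, by continuity of $U\mapsto\rS(U\rho U^\dagger||\sigma)$ together with compactness and path-connectedness of $\unitary{\cH_d}$.
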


\begin{proof}
Apparently, the unitary orbit $\cU_\rho$ of $\rho$ is a compact set.
Moreover, every differentiable curve through $\rho$ can be
represented locally as $\exp(tK)\rho \exp(-tK)$ for some
skew-Hermitian $K$, i.e. $K^\dagger = -K$. The derivative of this
curve at $t=0$ is $[K,\rho]:=K\rho-\rho K$ \cite{Bhatia1997}.

Let $f(U) := \rS(U\rho U^\dagger||\sigma)$ be defined over the
unitary group $\unitary{\cH_d}$. Clearly
$$
f(U) = -\rS(\rho) -\Tr{U\rho U^\dagger\log\sigma}.
$$
Since the unitary group $\unitary{\cH_d}$ is a path-connected and
compact space \cite{Baker2003}, it suffices to show that $f(U)$ is a
continuous function.

Let $U_t = \exp(tK)$ for an arbitrary skew-Hermitian $K$. Thus
\begin{eqnarray}
\frac{df(U_t)}{dt} = \Tr{U_t\rho U^\dagger_t[K,\log\sigma]},
\end{eqnarray}
implying
$$
\frac{df(U_t)}{dt}|_{t=0} = \Tr{\rho[K,\log\sigma]},
$$
which means
that $f(U)$ is continuous over $\unitary{\cH_d}$.

Without loss of generality, we assume that $U_0\in \unitary{\cH_d}$
is the extreme point of $f$. Consider an arbitrary differentiable
path $\set{\exp(tK)U_0}$ through $U_0$ in $\unitary{\cH_d}$ for
arbitrary skew-Hermitian $K$, it follows that
\begin{eqnarray*}
\frac{df(\exp(tK)U_0)}{dt}\big|_{t=0} &=&
\Tr{U_0\rho U^\dagger_0[K,\log\sigma]}\\
&=& \Tr{K[\log\sigma, U_0\rho U^\dagger_0]} \\
&=& 0.
\end{eqnarray*}
Thus, by the arbitrariness of $K$, we have $[\log\sigma, U_0\rho U^\dagger_0] = 0$. That is $[\sigma, U_0\rho U^\dagger_0] = 0$.
By the \emph{rearrangement inequality} in mathematics, the desired
conclusion is obtained.
\end{proof}
In fact, partial results in the above proposition has already been
reported in \cite{Zhang2012}. It was employed to study a modified
version of super-additivity inequality of relative entropy.

The above theorem also gives rise to the following inequality:
\begin{eqnarray}
\rH(\lambda^\downarrow(\rho)||\lambda^\downarrow(\sigma))\leqslant
\rS(\rho||\sigma) \leqslant
\rH(\lambda^\downarrow(\rho)||\lambda^\uparrow(\sigma)).
\end{eqnarray}

If we denote $\triangle \rS = \rS(\rho_{AB}||\sigma_{AB}) -
\rS(\rho_A||\sigma_A)- \rS(\rho_B||\sigma_B)$, then we have the
following inequality:
\begin{eqnarray}
\bar\Delta\leqslant \triangle \rS \leqslant\Delta,
\end{eqnarray}
where
\begin{eqnarray}
\bar\Delta
\defeq\rH(\lambda^\downarrow(\rho_{AB})||\lambda^\downarrow(\sigma_{AB}))
- \rH(\lambda^\downarrow(\rho_A)||\lambda^\uparrow(\sigma_A)) -
\rH(\lambda^\downarrow(\rho_B)||\lambda^\uparrow(\sigma_B)).
\end{eqnarray}

In order to study the sign of $\triangle\rS$, we now propose to
study the following four differences:
\begin{eqnarray}
\Delta_{\min} &\defeq& \rH(\lambda^\downarrow(\rho_{AB})||\lambda^\downarrow(\sigma_{AB})) - \rH(\lambda^\downarrow(\rho_A)||\lambda^\downarrow(\sigma_A)) - \rH(\lambda^\downarrow(\rho_B)||\lambda^\downarrow(\sigma_B)),\label{eq:reduction1}\\
\Delta_{\max} &\defeq& \rH(\lambda^\downarrow(\rho_{AB})||\lambda^\uparrow(\sigma_{AB})) - \rH(\lambda^\downarrow(\rho_A)||\lambda^\uparrow(\sigma_A)) - \rH(\lambda^\downarrow(\rho_B)||\lambda^\uparrow(\sigma_B)),\label{eq:reduction2}\\
\Delta_{\mathrm{mix}} &\defeq& \rH(\lambda^\downarrow(\rho_{AB})||\lambda^\uparrow(\sigma_{AB})) - \rH(\lambda^\downarrow(\rho_A)||\lambda^\uparrow(\sigma_A)) - \rH(\lambda^\downarrow(\rho_B)||\lambda^\downarrow(\sigma_B)),\label{eq:reduction3}\\
\Delta &\defeq&
\rH(\lambda^\downarrow(\rho_{AB})||\lambda^\uparrow(\sigma_{AB})) -
 \rH(\lambda^\downarrow(\rho_A)||\lambda^\downarrow(\sigma_A)) -
\rH(\lambda^\downarrow(\rho_B)||\lambda^\downarrow(\sigma_B)).\label{eq:reduction4}
\end{eqnarray}
An observation is made here:
$$
\bar\Delta\leqslant \Delta_{\min},\quad
\bar\Delta\leqslant\Delta_{\max}\leqslant
\Delta_{\mathrm{mix}}\leqslant \Delta.
$$
It can be seen that we can choose suitable qubit pair
$(\rho_{AB},\sigma_{AB})$ to ensure that $\triangle\rS$ can take
arbitrary values in the interval $[\bar\Delta,\Delta]$, which is
guaranteed by Proposition~\ref{prop:Zhang}.

In fact, by Proposition~\ref{prop:Zhang}, if we can show that at
least one of the above-mentioned four quantities is nonnegative,
then our conjectured inequality is correct.

Analytical proof concerning the above inequalities are expected. Proving these seems to be very difficult. Thus we turn to another method -- a numerical study in lower dimensions.

Consider a two-qubit pair $\rho_{AB},\sigma_{AB}$. Let
$\lambda^\downarrow(\rho_{AB}) =
[\lambda_1,\lambda_2,\lambda_3,\lambda_4]$ with
$\lambda_1\geqslant\lambda_2\geqslant\lambda_3\geqslant\lambda_4\geqslant0$
and $\sum_j \lambda_j = 1$; $\lambda^\downarrow(\sigma_{AB}) =
[\mu_1,\mu_2,\mu_3,\mu_4]$ with
$\mu_1\geqslant\mu_2\geqslant\mu_3\geqslant\mu_4>0$ and $\sum_j
\mu_j = 1$. Then the corresponding eigenvalue vectors of their reduced
density matrices, i.e. margins, are $\lambda^\downarrow(\rho_X) =
[1-\lambda_X, \lambda_X]^\t$ with $\lambda_X\in [0,\tfrac12]$.
Similarly, $\lambda^\downarrow(\sigma_X) = [1-\mu_X, \mu_X]^\t$ with
$\mu_X\in (0,\tfrac12]$. Note that $X=A,B$ in the above
formulations.

In what follows, we make a numerical study of each quantity defined
by Eq.~\eqref{eq:reduction1}--Eq.~\eqref{eq:reduction4} under the
constraints \eqref{lem:constraints} for a two-qubit pair $\rho_{AB}$
and $\sigma_{AB}$.

%=============================================================================%
\section{Numerical study}
%=============================================================================%

In this section, we investigate the numerical performance of the
modified superadditivity inequality of the relative entropy to
verify the correctness of our conjecture. Our tests were conducted
using {\sc Matlab} R2010b, and the random data were generated by the
function "\verb"rand"" in {\sc Matlab}.

We test two scenarios with respect to one thousand and one million
groups of random data for each quantity defined by
Eq.~\eqref{eq:reduction1}--Eq.~\eqref{eq:reduction4}. The
corresponding plots are listed in
Fig.~\ref{fig:1}--Fig.~\ref{fig:4}. Obviously, from
Fig.~\ref{fig:1}, we can see that the difference $\Delta_{\min}$
defined by Eq.~\eqref{eq:reduction1} is less than zero in many
cases. Note that in Fig.~\ref{fig:2}, there is only one negative
value of $\Delta_{\max}$ for the one thousand scenario, and a very
small number of points are located below the X-axis for the second
scenario. However, from Fig.~\ref{fig:3} and Fig.~\ref{fig:4}, it is
clear that all the differences $\Delta_{\text{mix}}$ and $\Delta$,
respectively, defined by Eq.~\eqref{eq:reduction3} and
Eq.~\eqref{eq:reduction4} are greater than zero, which supports our
conjecture.

Therefore analytical proof for the following two inequalities are
expected:
\begin{eqnarray}
\rH(\lambda^\downarrow(\rho_{AB})||\lambda^\uparrow(\sigma_{AB}))
&\geqslant&
\rH(\lambda^\downarrow(\rho_A)||\lambda^\uparrow(\sigma_A)) +
\rH(\lambda^\downarrow(\rho_B)||\lambda^\downarrow(\sigma_B)),\label{eq:mix}\\
\rH(\lambda^\downarrow(\rho_{AB})||\lambda^\uparrow(\sigma_{AB}))
&\geqslant&
 \rH(\lambda^\downarrow(\rho_A)||\lambda^\downarrow(\sigma_A)) +
\rH(\lambda^\downarrow(\rho_B)||\lambda^\downarrow(\sigma_B)).\label{eq:pure}
\end{eqnarray}
In fact, if Eq.~\eqref{eq:mix} holds, then Eq.~\eqref{eq:pure} a
\emph{fortiori} holds. Based on these numerical studies, we can make
a bold conjecture:
\begin{conj}
$\rS(U_A\ot U_B\rho_{AB}U^\dagger_A\ot U^\dagger_B||\sigma_{AB})
\geqslant \rS(U_A\rho_AU^\dagger_A||\sigma_A) +
\rS(U_B\rho_BU^\dagger_B||\sigma_B)$ for some unitaries
$U_X\in\unitary{\cH_X}$, where $X=A,B$.
\end{conj}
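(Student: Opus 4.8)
The plan is to recast the conjecture as a single trace inequality and then control it by local averaging. Write every relative entropy as $\rS(\tau||\sigma)=-\rS(\tau)-\Tr{\tau\log\sigma}$ and abbreviate $\tau_{AB}:=(U_A\ot U_B)\rho_{AB}(U_A\ot U_B)^\dagger$, with marginals $\tau_A=U_A\rho_AU_A^\dagger$ and $\tau_B=U_B\rho_BU_B^\dagger$. Since von Neumann entropy is invariant under local unitaries, $\rS(\tau_{AB})=\rS(\rho_{AB})$, $\rS(\tau_A)=\rS(\rho_A)$, $\rS(\tau_B)=\rS(\rho_B)$. Introducing the correlation operator $\Delta_\sigma:=\log\sigma_{AB}-\log\sigma_A\ot\I-\I\ot\log\sigma_B$ and using $\Tr{\tau_{AB}(\log\sigma_A\ot\I)}=\Tr{\tau_A\log\sigma_A}$ (and the analogue on $B$), a direct cancellation yields
\[
\rS(\tau_{AB}||\sigma_{AB})-\rS(\tau_A||\sigma_A)-\rS(\tau_B||\sigma_B)=I_\rho-\Tr{\tau_{AB}\Delta_\sigma},\qquad I_\rho:=\rS(\rho_A)+\rS(\rho_B)-\rS(\rho_{AB}).
\]
By subadditivity $I_\rho\geqslant0$, and when $\sigma_{AB}=\sigma_A\ot\sigma_B$ we have $\Delta_\sigma=0$, so the conjecture holds for every $U_A,U_B$. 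Thus the entire content of the conjecture is the existence of local unitaries with
\[
\min_{U_A,U_B}\Tr{(U_A\ot U_B)\rho_{AB}(U_A\ot U_B)^\dagger\,\Delta_\sigma}\leqslant I_\rho,
\]
i.e.\ we must anti-align the local-unitary orbit of $\rho_{AB}$ against the fixed Hermitian operator $\Delta_\sigma$, which depends only on $\sigma$.

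To bound the minimum I would use averaging. A first crude bound comes from the product Haar measure, which depolarizes each factor, $\expect_{U_A,U_B}[\tau_{AB}]=\I_{AB}/(d_Ad_B)$, giving $\min_{U_A,U_B}\Tr{\tau_{AB}\Delta_\sigma}\leqslant\Tr{\Delta_\sigma}/(d_Ad_B)$; this is clean but generally too weak, as its right-hand side ignores $\rho$ entirely. To keep the marginal terms pinned while still killing correlations, I would proceed in two stages. By Proposition~\ref{prop:Zhang}, first pick $W_A,W_B$ so that $W_A\rho_AW_A^\dagger$ and $W_B\rho_BW_B^\dagger$ are diagonal in the eigenbases of $\sigma_A,\sigma_B$ with eigenvalues in matching non-increasing order; this already forces the two marginal relative entropies to their minima $\rH(\lambda^\downarrow(\rho_A)||\lambda^\downarrow(\sigma_A))$ and $\rH(\lambda^\downarrow(\rho_B)||\lambda^\downarrow(\sigma_B))$. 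Then average only over the diagonal phase unitaries $V_A\ot V_B$ in those eigenbases: this leaves $\tau_A,\tau_B$ (hence the marginal terms) untouched but projects $\tau_{AB}$ onto its diagonal $\{r_{ab}\}$ in the product eigenbasis $\{\ket{a}\ot\ket{b}\}$. Using operator concavity of the logarithm, $\Innerm{ab}{\log\sigma_{AB}}{ab}\leqslant\log\Innerm{ab}{\sigma_{AB}}{ab}$, the averaged inequality reduces to the classical statement
\[
\Tr{\tau_{AB}\Delta_\sigma}\leqslant\sum_{a,b}r_{ab}\log\frac{s_{ab}}{p_aq_b},
\]
where $p_a,q_b$ are the eigenvalues of $\sigma_A,\sigma_B$ and $s_{ab}=\Innerm{ab}{\sigma_{AB}}{ab}$ is a joint distribution whose marginals are exactly $\{p_a\}$ and $\{q_b\}$. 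The remaining goal is to verify $\sum_{a,b}r_{ab}\log(s_{ab}/p_aq_b)\leqslant I_\rho$, a finite inequality relating the diagonal of the correlated reference to the mutual information of $\rho$.

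For qubit pairs the leftover optimization is fully explicit: parameterize $U_A,U_B\in\mathrm{SU}(2)$ by Bloch rotations, expand $\rho_{AB}$ and $\Delta_\sigma$ in the Pauli basis, and invoke Proposition~\ref{lem:constraints} (Bravyi) to restrict the admissible margin spectra in terms of $\lambda_1,\dots,\lambda_4$; then $\Tr{\tau_{AB}\Delta_\sigma}$ is a trigonometric polynomial in the Euler angles whose minimum can be estimated region by region. The hard part — and the reason an analytic proof is elusive despite the numerics — is the coupling forced by the product constraint: the same $U_A,U_B$ that would best anti-align $\tau_{AB}$ with $\sigma_{AB}$ (driving the joint term toward the global extremizer $\rH(\lambda^\downarrow(\rho_{AB})||\lambda^\uparrow(\sigma_{AB}))$ that makes $\Delta$ in Eq.~\eqref{eq:reduction4} nonnegative) are \emph{not} the ones minimizing the marginal terms, and in general no product unitary reaches that global extremizer, since the eigenvectors of $\Delta_\sigma$ are entangled and cannot be simultaneously diagonalized by $U_A\ot U_B$. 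The dephasing bound above, while pinning the margins, therefore discards exactly the off-diagonal weight of $\sigma_{AB}$ that an optimal correlated choice would exploit; closing the gap requires a quantitative lower bound on $I_\rho$ in terms of the correlations of $\sigma_{AB}$ (equivalently, proving $\sum_{a,b}r_{ab}\log(s_{ab}/p_aq_b)\leqslant I_\rho$ under the Bravyi constraints). I expect this to be the main obstacle, and the most promising way to finish is a continuity/intermediate-value argument on the connected compact set $\unitary{\cH_A}\times\unitary{\cH_B}$ combined with the explicit qubit computation.
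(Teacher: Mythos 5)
You should know at the outset that the paper contains no proof of this statement: it is advanced as a ``bold conjecture'' purely on the strength of the numerics in Fig.~\ref{fig:3}--Fig.~\ref{fig:4}, and the paper's only remark toward a proof is that one would need to characterize local unitary equivalence of bipartite states. So there is no paper argument to match yours against. On its own terms, your algebraic reduction is correct and goes beyond what the paper offers: with $\tau_{AB}=(U_A\ot U_B)\rho_{AB}(U_A\ot U_B)^\dagger$, the identity
\[
\rS(\tau_{AB}||\sigma_{AB})-\rS(\tau_A||\sigma_A)-\rS(\tau_B||\sigma_B)=I_\rho-\Tr{\tau_{AB}\Delta_\sigma},
\qquad I_\rho=\rS(\rho_A)+\rS(\rho_B)-\rS(\rho_{AB}),
\]
is valid, the product-$\sigma$ case follows, and the pinching bound via phase averaging is sound. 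But the proposal is not a proof: it terminates at the unverified inequality $\sum_{a,b}r_{ab}\log\bigl(s_{ab}/(p_aq_b)\bigr)\leqslant I_\rho$, which you explicitly defer, and the suggested continuity/intermediate-value finish cannot supply it.

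The gap cannot be closed, because your own reformulation exposes a counterexample to the conjecture. Take $\rho_{AB}=I_{AB}/4$, the maximally mixed two-qubit state: every unitary, local or global, fixes it, its margins are $I/2$ and are likewise fixed, and $I_\rho=0$, so the conjecture demands $\tfrac14\Tr{\Delta_\sigma}\leqslant 0$. Now take $\sigma_{AB}=\mathrm{diag}(0.97,\,0.01,\,0.01,\,0.01)$ in a product basis; it is full rank with margins $\sigma_A=\sigma_B=\mathrm{diag}(0.98,\,0.02)$. Then, in bits,
\[
\tfrac14\Tr{\Delta_\sigma}
=\tfrac14\Bigl[\log_2\tfrac{0.97}{0.98^2}+2\log_2\tfrac{0.01}{0.98\cdot 0.02}+\log_2\tfrac{0.01}{0.02^2}\Bigr]\approx 0.68>0,
\]
equivalently $\rS(\rho_{AB}||\sigma_{AB})\approx 2.99$ while $\rS(\rho_A||\sigma_A)+\rS(\rho_B||\sigma_B)\approx 3.67$, and no choice of $U_A,U_B$ (nor even of a global $U_{AB}$, so the original conjecture of the abstract fails as well) can change either side. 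Both spectral data sets satisfy Bravyi's constraints in Proposition~\ref{lem:constraints}, so this pair lies inside the admissible region the paper claims to have sampled; moreover, since $\rho_{AB}$ is maximally mixed, the same numbers show $\Delta$ of Eq.~\eqref{eq:reduction4} equals $-0.68<0$, contradicting the claim drawn from Fig.~\ref{fig:4} --- the random sampling evidently missed the corner where $\rho_{AB}$ is nearly maximally mixed and $\sigma_{AB}$ is strongly skewed. By continuity the failure persists for full-rank, nondegenerate perturbations of $\rho_{AB}$, so this is an open set of counterexamples, not a boundary artifact. The constructive upshot for you: the reduction $\Delta\rS=I_\rho-\Tr{\tau_{AB}\Delta_\sigma}$ is exactly the right tool here, but what it honestly yields is a disproof of the conjecture, not a proof.
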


We give a little remark on the above conjecture. To prove it, we
need to characterize local unitary equivalence between two bipartite
states. We say that $\rho_{AB}$ is local unitary equivalent to
$\rho'_{AB}$ if there exist unitaries $U_X\in\unitary{\cH_X}(X=A,B)$
such that
$$
\rho'_{AB} = (U_A\ot U_B) \rho_{AB}(U_A\ot U_B)^\dagger.
$$
Along with this line, the readers, for instance, can be referred to
\cite{Fei2012}.

%%%============== Old version of tex  ===============
\begin{figure}[htbp]
\centering
\includegraphics[width=0.49\textwidth]{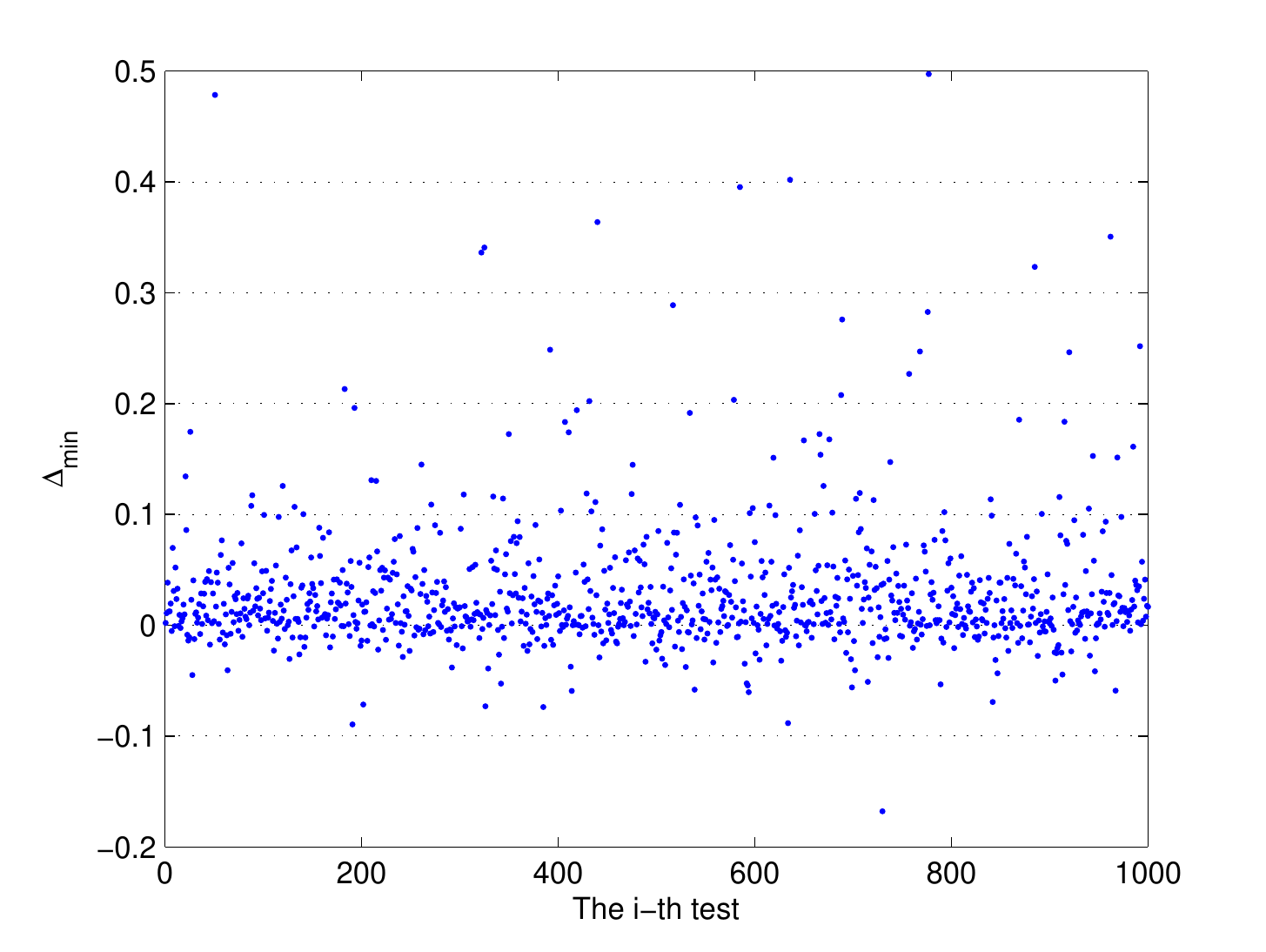}
\includegraphics[width=0.49\textwidth]{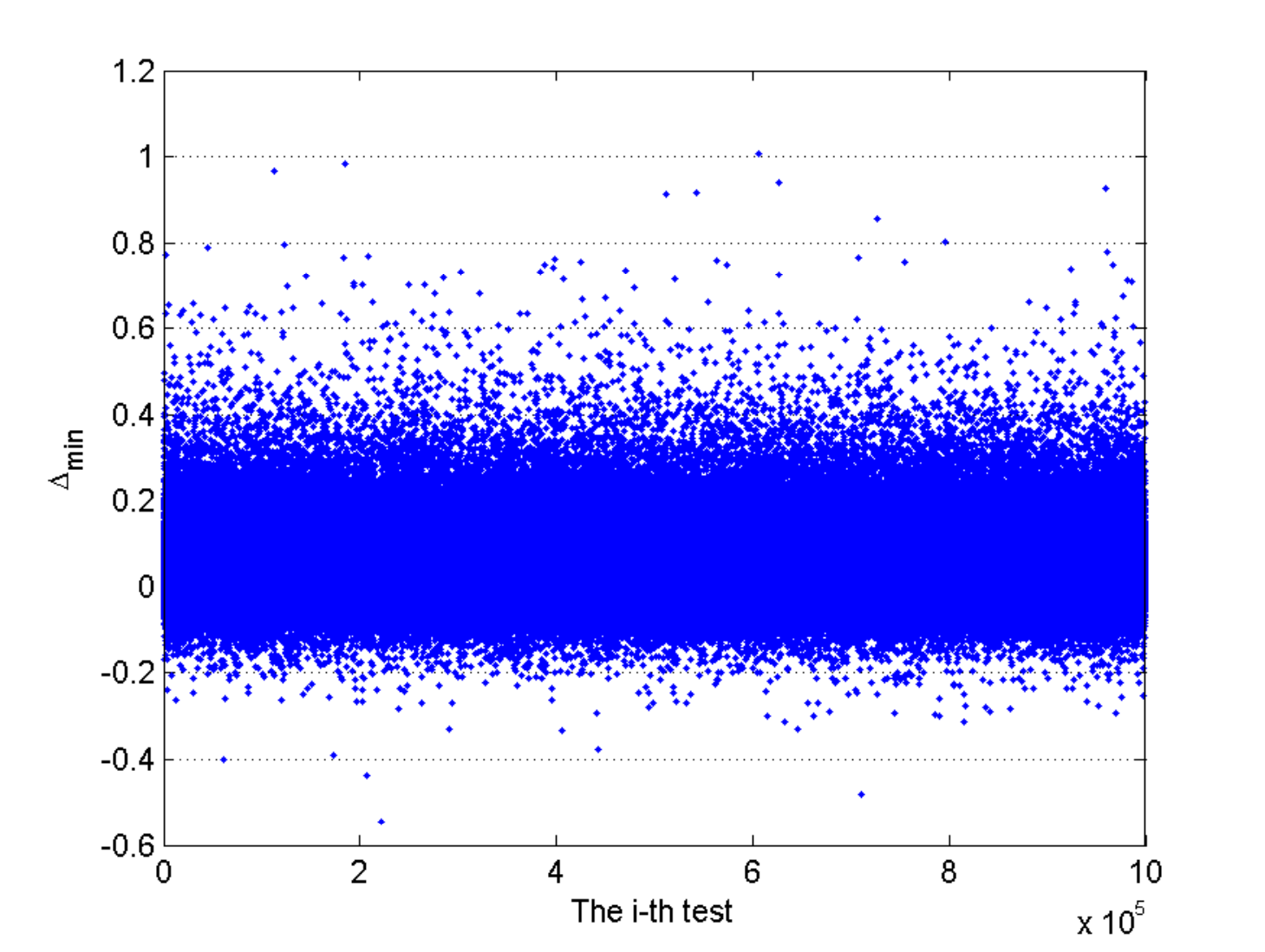}\caption{Two scenarios with respect to one thousand (the left one) and one million (the right one) groups of random data for testing $\Delta_{\min}$.}\label{fig:1}
\end{figure}

\begin{figure}[htbp]
\centering
\includegraphics[width=0.49\textwidth]{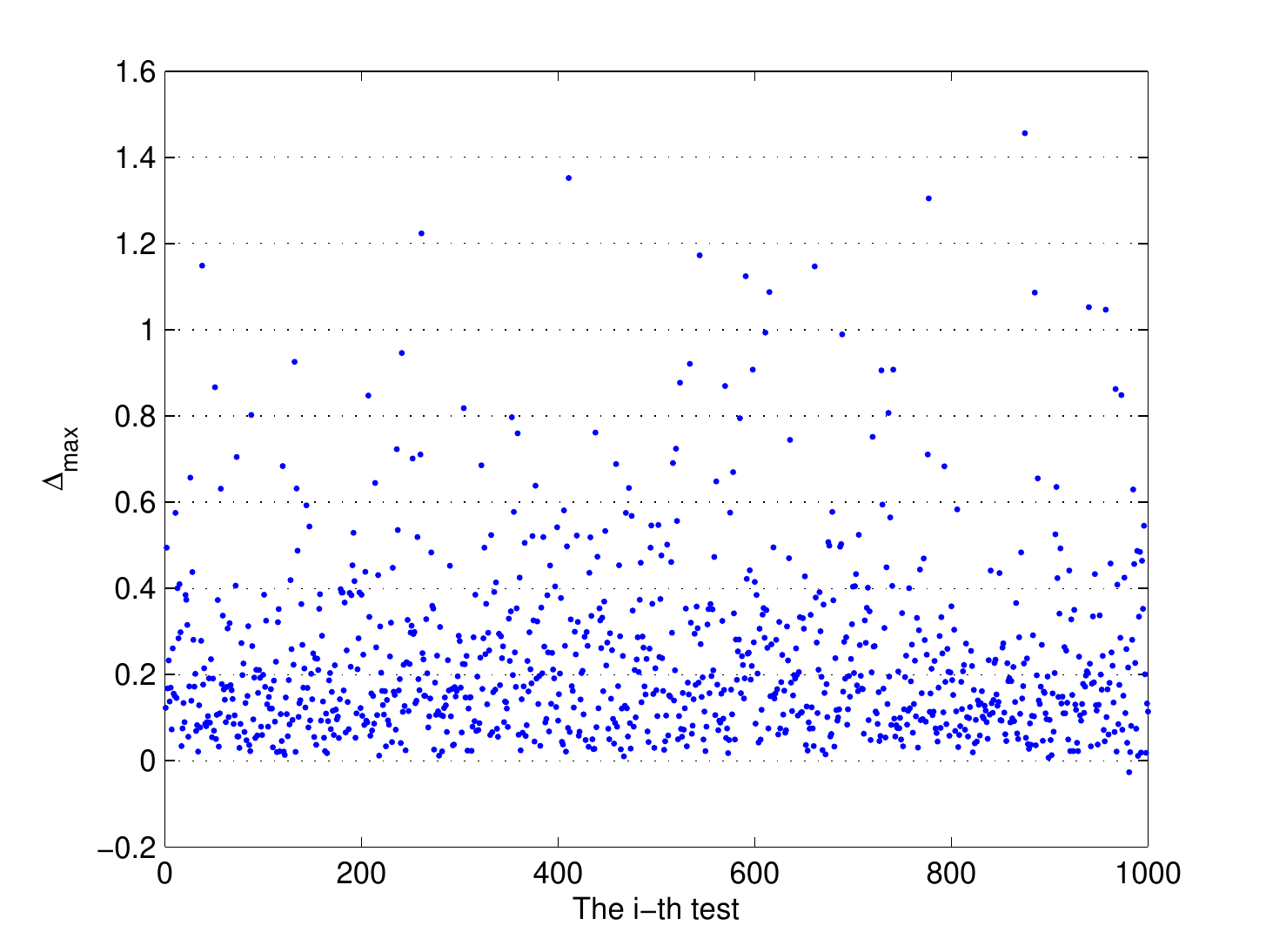}
\includegraphics[width=0.49\textwidth]{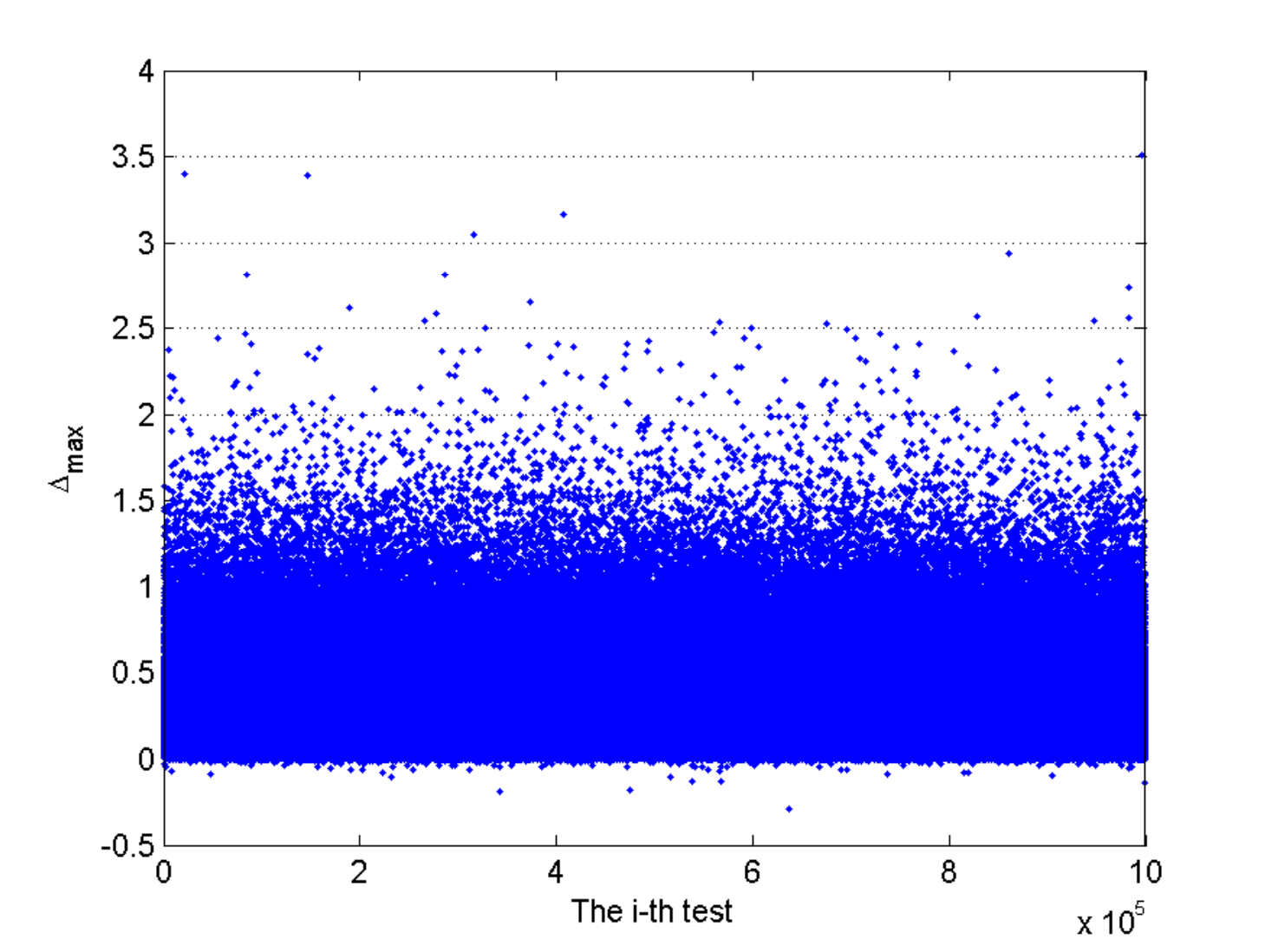}\caption{Two scenarios with respect to one thousand (the left one) and one million (the right one) groups of random data for testing $\Delta_{\max}$.}\label{fig:2}
\end{figure}

\begin{figure}[htbp]
\centering
\includegraphics[width=0.49\textwidth]{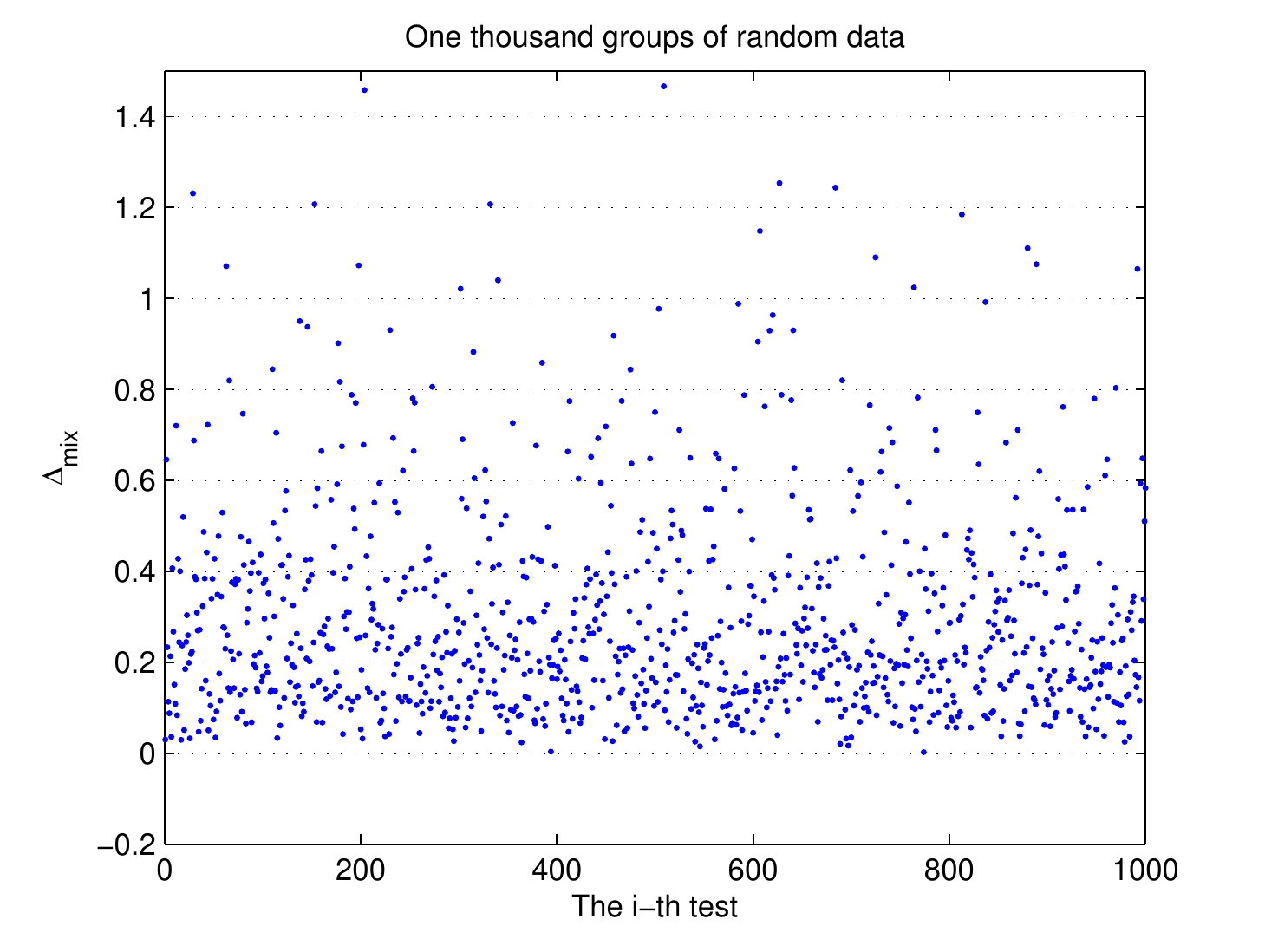}
\includegraphics[width=0.49\textwidth]{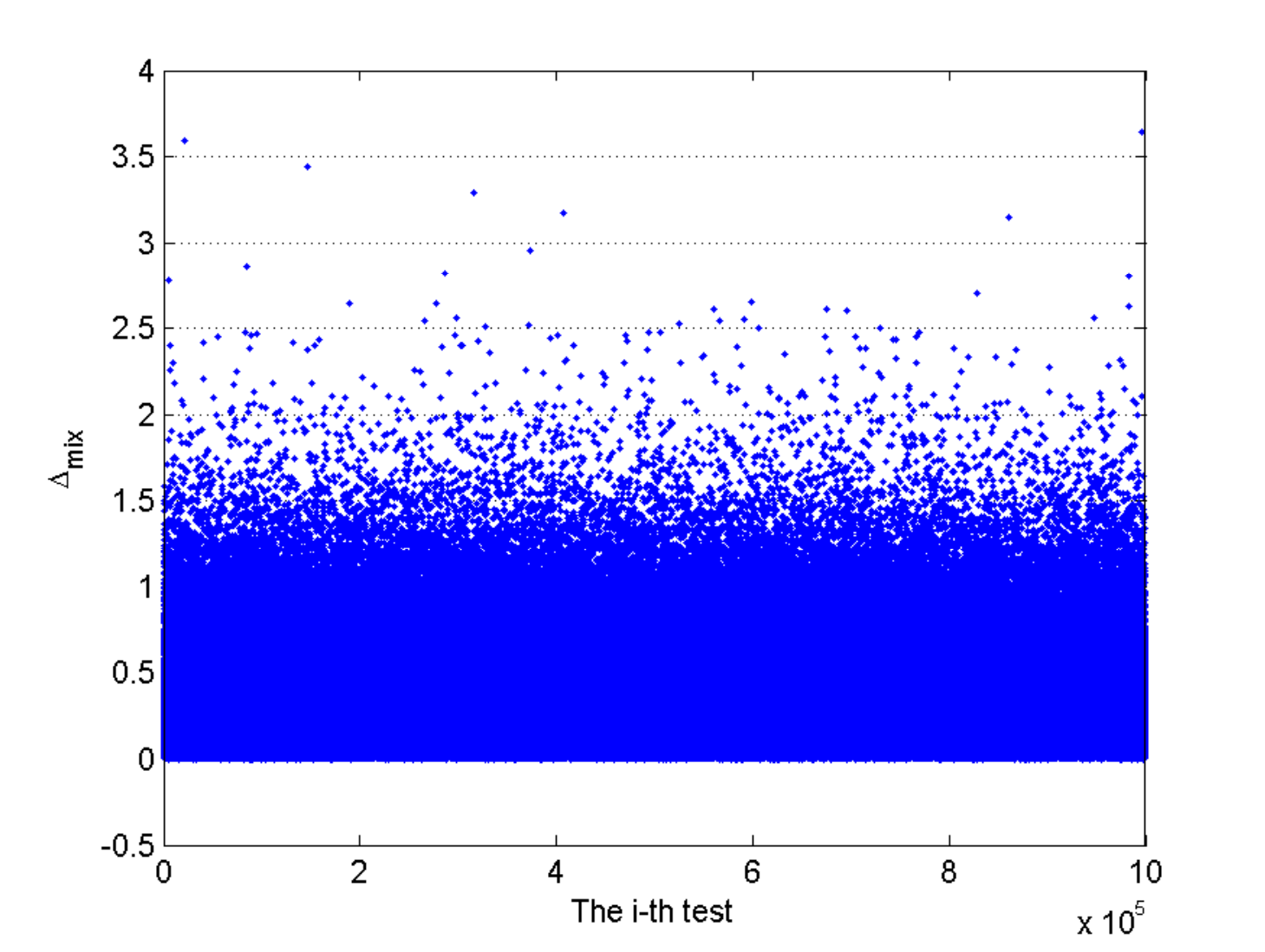}\caption{Two scenarios with respect to one thousand (the left one) and one million (the right one) groups of random data for testing $\Delta_{\text{mix}}$.}\label{fig:3}
\end{figure}

\begin{figure}[htbp]
\centering
\includegraphics[width=0.49\textwidth]{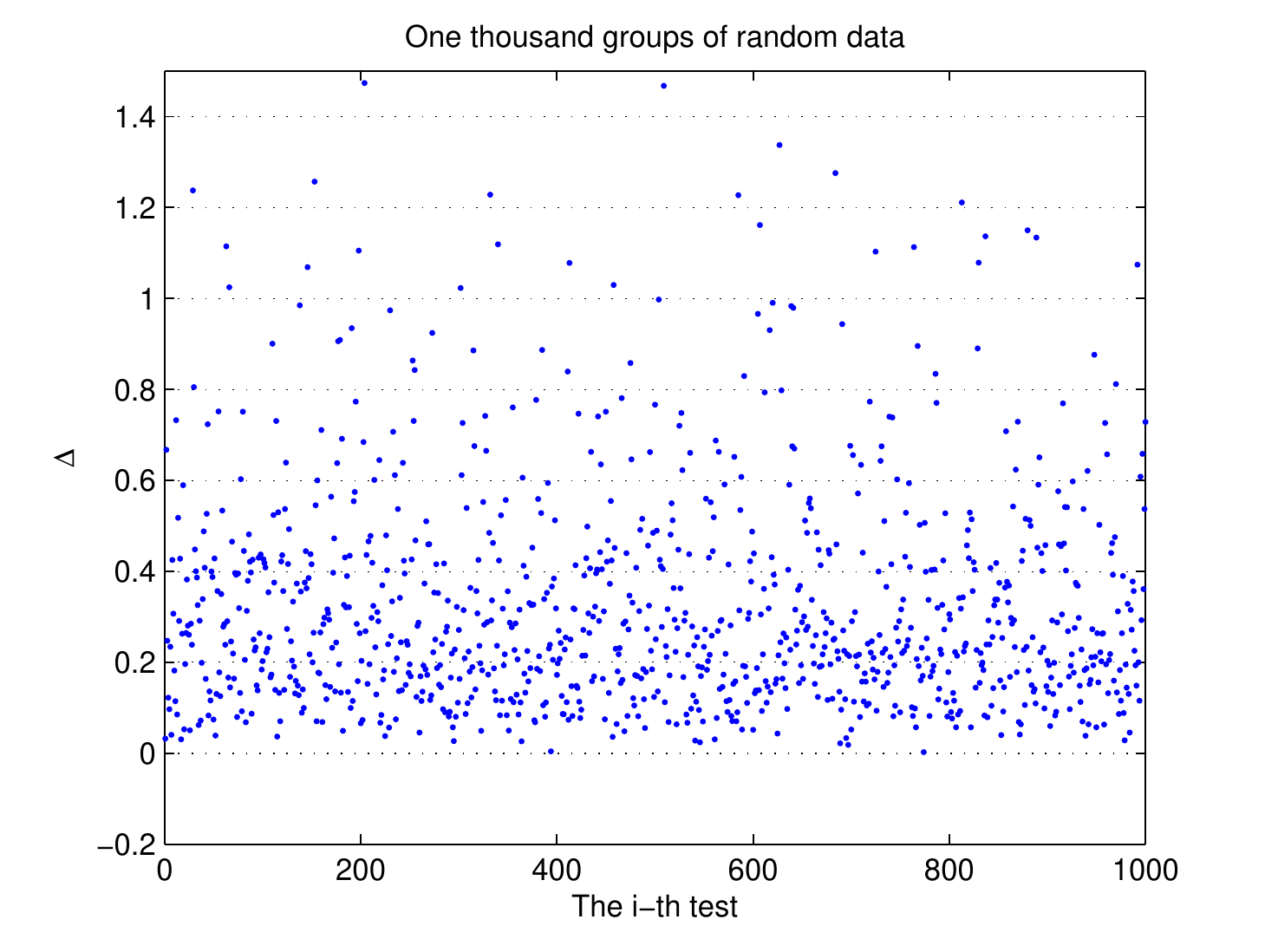}
\includegraphics[width=0.49\textwidth]{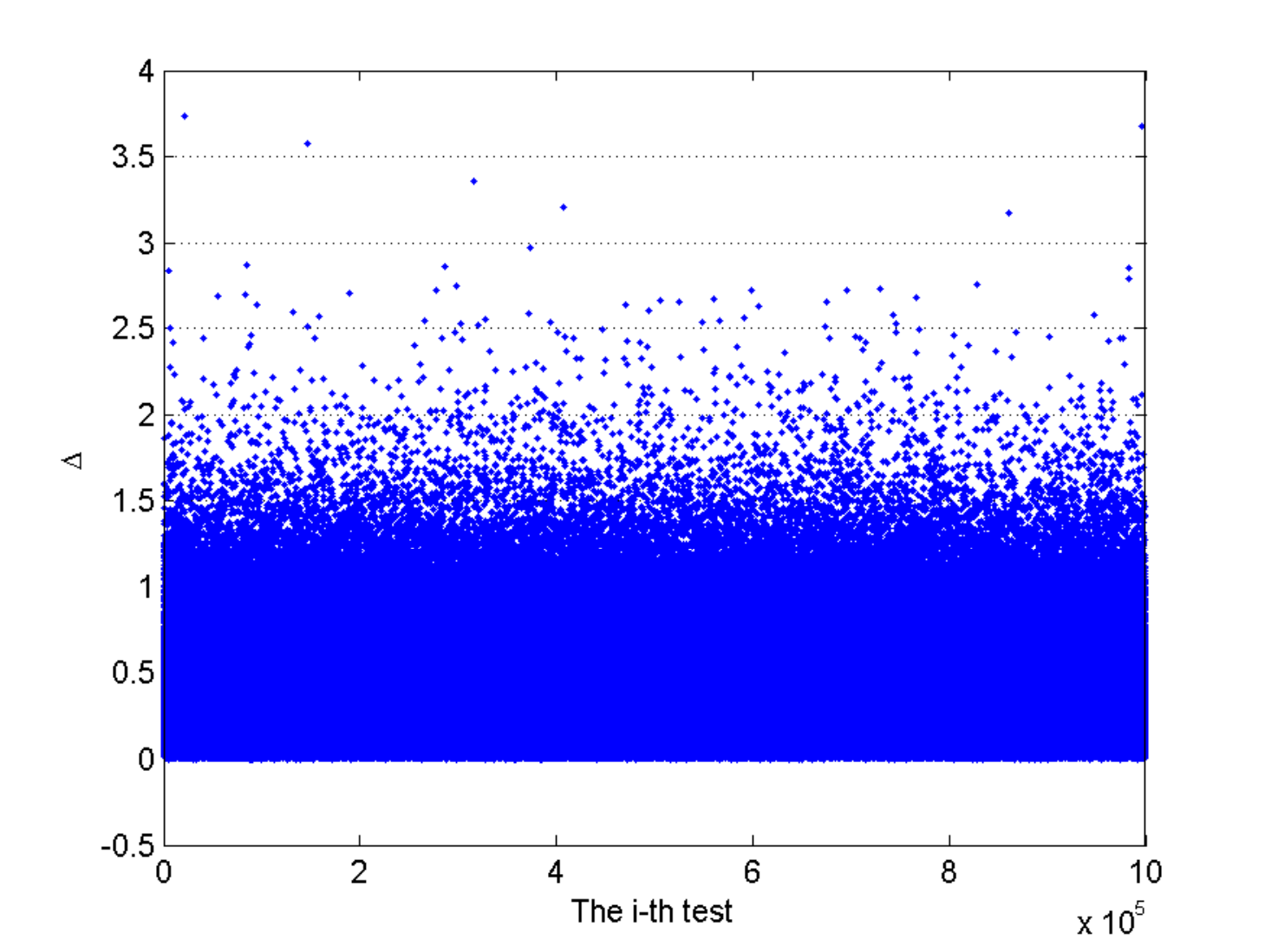}
\caption{Two scenarios with respect to one thousand (the left one) and one million (the right one) groups of random data for testing $\Delta$.}\label{fig:4}
\end{figure}

%%%=============== New version of tex ==================
%\begin{figure}[htbp]
%\centering
%\includegraphics[width=0.49\textwidth]{Delta_min1k-eps-converted-to.pdf}
%\includegraphics[width=0.49\textwidth]{Delta_min_1bw-eps-converted-to.pdf}\caption{Two scenarios with respect to one thousand (the left one) and one hundred thousand (the right one) groups of random data for testing $\Delta_{\min}$.}\label{fig:1}
%\end{figure}
%
%\begin{figure}[htbp]
%\centering
%\includegraphics[width=0.49\textwidth]{Delta_max1k-eps-converted-to.pdf}
%\includegraphics[width=0.49\textwidth]{Delta_max_1bw-eps-converted-to.pdf}\caption{Two scenarios with respect to one thousand (the left one) and one hundred thousand (the right one) groups of random data for testing $\Delta_{\max}$.}\label{fig:2}
%\end{figure}
%
%\begin{figure}[htbp]
%\centering
%\includegraphics[width=0.49\textwidth]{Delta_mix1k-eps-converted-to.pdf}
%\includegraphics[width=0.49\textwidth]{Delta_mix_1bw-eps-converted-to.pdf}\caption{Two scenarios with respect to one thousand (the left one) and one hundred thousand (the right one) groups of random data for testing $\Delta_{\text{mix}}$.}\label{fig:3}
%\end{figure}
%
%\begin{figure}[htbp]
%\centering
%\includegraphics[width=0.49\textwidth]{Delta1k-eps-converted-to.pdf}
%\includegraphics[width=0.49\textwidth]{Delta_1bw-eps-converted-to.pdf}
%\caption{Two scenarios with respect to one thousand (the left one) and one hundred thousand (the right one) groups of random data for testing $\Delta$.}\label{fig:4}
%\end{figure}

%===========================================================================%
\section{Conclusion}
%===========================================================================%

In this context, we conducted numerical studies on the modified
super-additivity of relative entropy. These data strongly support
the following inequality: for qubit pair $(\rho_{AB},\sigma_{AB})$,
\begin{eqnarray}
\rH(\lambda^\downarrow(\rho_{AB})||\lambda^\uparrow(\sigma_{AB})) \geqslant
 \rH(\lambda^\downarrow(\rho_A)||\lambda^\downarrow(\sigma_A)) +
\rH(\lambda^\downarrow(\rho_B)||\lambda^\downarrow(\sigma_B)).
\end{eqnarray}
We guess the conjectured inequality hold for a general qudit pair
$(\rho_{AB},\sigma_{AB})$.

Our numerical studies show that the super-additivity inequality of relative entropy is indeed not valid globally even for full-ranked states:
$$
\rS(\rho_{AB}||\sigma_{AB}) \ngeqslant \rS(\rho_A||\sigma_A) + \rS(\rho_B||\sigma_B).
$$

In the future research, we will consider the following constrained optimization problems under local unitary transformations:
\begin{eqnarray}
&&\max_{U_A\in\unitary{\cH_A},U_B\in\unitary{\cH_B}} \rS(U_A\ot U_B \rho_{AB}U^\dagger_A\ot U^\dagger_B||\sigma_{AB}),\\
&&\min_{U_A\in\unitary{\cH_A},U_B\in\unitary{\cH_B}} \rS(U_A\ot U_B
\rho_{AB}U^\dagger_A\ot U^\dagger_B||\sigma_{AB}).
\end{eqnarray}
Along this line, some investigations has already been done, for
instance, Gharibian in \cite{Gharibian} proposed a measure of
nonclassical correlations in bipartite quantum states based on local
unitary operations; Giampaolo \emph{et. al} in \cite{Giampaolo}
derived the exact relation between the global state change induced
by local unitary evolutions (in particular being generated by a
local Hamiltonian) and the amount of quantum correlations; moreover
they showed that only those composite quantum systems possessing
non-vanishing quantum correlations have the property that any
nontrivial local unitary evolution changes their global state. The
proposed optimization problems are the subject of ongoing
investigations and we hope to report on them in the future.

%===========================================================================%
\subsubsection*{Acknowledgement}

This work is supported by NSFC (Nos.11301123, 11301124, 11361065).

%===========================================================================%

%=============================================================================%


\begin{thebibliography}{99}

\bibitem{Zhang2012}
L. Zhang, J. Wu, and S.-M. Fei, Comment on "Convergence of macrostates under
reproducible processes"[Phys. Lett. A 374: 3715-3717
(2010)], \href{http://dx.doi.org/10.1016/j.physleta.2013.05.047}{Phys. Lett. A \textbf{377}: 1794-1796 (2013).}

\bibitem{Rau2010}
J.~Rau, Convergence of macrostates under reproducible processes,
\href{http://dx.doi.org/10.1016/j.physleta.2010.07.024}{Phys. Lett. A \textbf{374}: 3715--3717 (2010).}

\bibitem{Klyachko2006}
A. Klyachko, Quantum marginal problem and $N$-representability,
\href{http://dx.doi.org/10.1088/1742-6596/36/1/014}{J. Phys.
Conference Series \textbf{36}, 72-86 (2006).}

\bibitem{Bravyi2004}
S. Bravyi, Compatibility between local and multipartite states,
Quant Inf. \& Comput. \textbf{4}(1), 012-026 (2004).

\bibitem{Zhang}
L. Zhang and S-M Fei, Quantum fidelity and relative entropy between
unitary orbits,
\href{http://dx.doi.org/10.1088/1751-8113/47/5/055301}{J. Phys. A:
Math. Theor. \textbf{47}, 055301 (2014).}

\bibitem{Bhatia1997}
R. Bhatia, Matrix Analysis, Springer-Verlag, New York Inc. (1997).

\bibitem{Baker2003}
A. Baker, Matrix Groups: An introduction to Lie group theory,
Springer (2003).

\bibitem{Fei2012}
C. Zhou, T. Zhang, S. Fei, N. Jing, and X. Li-Jost, Local unitary
equivalence of arbitrary dimensional bipartite quantum states,
\href{http://link.aps.org/doi/10.1103/PhysRevA.86.010303}{Phys. Rev.
A \text{86}, 010303(R) (2012).}

\bibitem{Gharibian}
S. Gharibian, Quantifying nonclassicality with local unitary
operations,
\href{http://dx.doi.org/10.1103/PhysRevA.86.042106}{Phys. Rev. A
\textbf{86}, 042106 (2012).}

\bibitem{Giampaolo}
S.M. Giampaolo, S. Streltsov, W. Roga, D. Bru{\ss}, and F.
Illuminati, Quantifying nonclassicality: Global impact of local
unitary evolutions,
\href{http://dx.doi.org/10.1103/PhysRevA.87.012313}{Phys. Rev. A
\textbf{87}, 012313 (2013).}

\end{thebibliography}
\end{document}